\def\E{\mathbf{E}}
\def\Pr{\mbox{{\bf P}}}
\def\Prob{\Pr}
\def\whp{{w.h.p.}}
\newcommand{\ind}[1]{\mbox{{\large 1}} _{#1}}
\def\bigO{\mathcal O}
\newenvironment{proofof}[1]{{\it \proofname\ of #1.}~}{\qed}
\newcommand{\ignore}[1]{}
\newtheorem{theorem}{Theorem}
\newtheorem{lemma}{Lemma}
\newtheorem{corollary}{Corollary}
\newcounter{rot}
  \def\d{\delta} 
\def\f{\phi}   
\def\G{\Gamma}  
 \def\th{\theta}    \def\l{\lambda}
\def\r{\rho}
   \def\Om{\Omega}
\newcommand{\wh}[1]{\widehat{#1}}
\newcommand{\brac}[1]{\left(#1\right)}
\newcommand{\beq}[1]{\begin{equation}\label{#1}}
\newcommand{\eeq}{\end{equation}}
\def\COBRA{COBRA }
\def\Cobra{COBRA}
\def\BIPS{{BIPS} }
\def\Bips{{BIPS}}
\def\COV{\mbox{$\text{COV}$}}
\def\cov{\mbox{$\text{\textbf{cov}}$}}
\def\Hit{\mbox{$\text{Hit}$}}
\def\INF{\mbox{$\text{Infec}$}}
\def\inf{\mbox{$\text{\textbf{infec}}$}}
\begin{document}

\title{The Coalescing-Branching Random Walk on Expanders and 
the Dual Epidemic Process
}

\author{
Colin Cooper\thanks{Department of Informatics, King's College London, UK. {\tt colin.cooper@kcl.ac.uk}}
\and Tomasz Radzik\thanks{Department of Informatics, King's College London, UK. {\tt tomasz.radzik@kcl.ac.uk}}
\and Nicol\'as Rivera\thanks{Department of Informatics, King's College London, UK.  {\tt nicolas.rivera@kcl.ac.uk}}
\thanks{Research supported
by EPSRC grant EP/M005038/1,
``Randomized algorithms for computer networks''.
Nicol\'as~Rivera was supported by funding from Becas CHILE.
}}

\date{23 May 2016}

\maketitle

\begin{abstract}
Information propagation on graphs is a fundamental topic in distributed computing. 
One of the simplest models of information propagation is the push protocol in which 
at each round each agent independently pushes the current knowledge to a random neighbour. 
In this paper we study the so-called coalescing-branching random walk (\Cobra), 
in which each vertex pushes the information to $k$ randomly selected neighbours and then  
stops passing information until it receives the information again.
The aim of \COBRA is to propagate information fast but with a
limited number of transmissions per vertex per step. In this paper we study 
the cover time of the \COBRA  process defined as the minimum time until each vertex 
has received the information at least once.
Our main result says that if $G$ is an $n$-vertex $r$-regular graph whose transition matrix 
has second eigenvalue $\lambda$, then the \COBRA cover time of $G$ is $\bigO(\log n )$, 
if $1-\lambda$  is greater than a positive constant,
and $\bigO((\log n)/(1-\lambda)^3))$, if $1-\lambda \gg \sqrt{\log( n)/n}$.
These bounds are independent of $r$ and hold for $3 \le r \le n-1$.
They improve the previous bound of $O(\log^2 n)$ for expander graphs [Dutta et al., SPAA 2013].

Our main  tool in analysing the \COBRA process is
a novel duality relation between this process and a discrete epidemic process, which we
call a biased infection with persistent source (\Bips).
A fixed vertex $v$ is the source of an infection and remains permanently infected.
At each step each vertex $u$ other than $v$ selects $k$ neighbours, independently and uniformly, 
and $u$ is infected in this step if and only if at least one of the selected neighbours 
has been infected in the previous step.
We show the duality between \COBRA and \BIPS which
says that the time to infect the whole graph in the \BIPS process is of
the same  order as the cover time of the \COBRA process.

{\bf Keywords}: random processes on graphs; epidemic processes; cover time.
\end{abstract}

\section{Introduction}

Dutta et al.~\cite{COBRA,Dutta:2015:CRW:2821462.2817830} studied the following
{\em coalescing-branching} random walk process for propagating information
through a connected $n$-vertex graph.
At the start of a round each vertex containing information ``pushes'' this information to
$k$ randomly selected neighbours, then it stops passing the information until it receives the information again.
At the end of a round if a vertex receives information from two or more vertices,
then the information coalesces into one. Thus it does not help if a vertex receives
the same information from more than one neighbour. The continuous act of coalescing and branching
gives the name COBRA to this process.

The aim of the \COBRA process is to rapidly propagate information to all vertices
but to limit the number of transmissions per vertex per step
and without requiring that vertices store information for longer than one round.
In the special case that $k=1$, the \COBRA process is a simple random walk,
which achieves a low transmission rate but does not satisfy the fast propagation condition.

The main quantity of interest in
information propagation processes
is the time taken to inform (or  visit) all vertices. By analogy with a random walk,
this is referred to as the {\em cover time}.
The w.h.p.\footnote{``With high probability,'' which means in this paper probability at most $n^{-c}$,
for some positive constant $c$.} cover time results
for the \COBRA process obtained in \cite{COBRA,Dutta:2015:CRW:2821462.2817830} 
for the case $k=2$ include the following.
(i)
For the complete graph $K_n$ all vertices are visited in $\bigO(\log n)$ rounds.
(ii)
For regular constant degree expanders, the cover time is $\bigO(\log^2 n)$.
(iii)
For the $d$-dimensional grid, the cover time is $\tilde{O}(n^{1/d})$.
By comparison with the complete graph, it might seem that the cover time of any
$r$-regular expander by the \COBRA process (with $k=2$) should be $\bigO(\log n)$ for any  degree $r$ between $3$ and $n-1$. The proof of this is the main content of this paper (see Theorem \ref{Th1} below).
This is the best possible asymptotic bound
since
the number of visited vertices at most doubles in each round.

The  \COBRA process imitates a type of epidemic process but with an upper bound $k$ on the number of contacts.
Indeed, the \COBRA process turns out to be a discrete version of the
{\em contact process}, which is a continuous  model with exponential
waiting times, in which (typically) a particle at vertex $v$ infects
each neighbour with rate 
$\mu$, and becomes extinct with rate 1.
One difference between the processes is that a contact process can die out, whereas the \COBRA one does not.
The contact process was introduced by Harris in 1974 \cite{Har},
and has been extensively  studied on infinite lattices and  trees.
A major topic of study is, given the initial spread of infection, to
determine the values of $\mu$ for which the process is transient, recurrent, and stationary.
See for example, Madras and Schinazi \cite{MadS} for a concise summary, and  also Liggett \cite{Lig}; Pemantle \cite{Pem} gives a  more detailed analysis on trees, Liggett for finite trees \cite{Lig2}. The work of Bezuidenhout and  Grimmett \cite{BZG} was a breakthrough paper for lattices.

We proceed with the formal definition of the COBRA process and the statement of our main results.

\textbf{Coalescing Branching Random Walk (COBRA):} Consider a graph $G = (V,E)$ and an integer $k \geq 1$. Let $C \subseteq V$ and consider the set process $(C_t)_{t \geq 0}$ with $C_0 = C$
and $C_{t+1}$ is defined as follows. Let $C_t$ be the vertices chosen at round $t$ (not necessarily for the first time).
Each vertex $v \in C_t$ independently chooses $k$ neighbours uniformly at random with replacement and all the chosen vertices belong to $C_{t+1}$.

For $C_0 = \{ u \}$, let $\cov(u) = \min\{ T: \bigcup_{t=1}^{T} C_t = V\}$
be the number of steps needed for the \COBRA process
to visit all vertices of the graph $G$ starting from vertex $u$; and let $\COV(u) = \E(\cov(u))$.
By analogy with the cover time of  a random walk, which measures the worst case
starting vertex, we let $\COV(G)=\max_{u \in V} \COV(u)$ be the cover time of the \COBRA process.

Let $G$ be a connected $n$-vertex $r$-regular graph with adjacency matrix $A(G)$
and random-walk transition matrix $P=A(G)/r$. Let $\l_1, \l_2,...,\l_n$
be the eigenvalues of the transition matrix ordered in a non-increasing sequence.
Thus $\l_1=1$, $\l_n \ge -1$. Let
$\l=\l_{\max}=\max_{i=2,...,n} |\l_i|$ be the second largest eigenvalue (in absolute value).
If the graph $G$ is not bipartite, then $\l <1$.  In which case we have the following theorem.

\begin{theorem}\label{Th1}
Let $G$ be a connected regular $n$-vertex graph with 
$1-\lambda\gg \sqrt{(\log n)/n}$. \footnote{%
$(1-\lambda) \ge  C\sqrt{(\log n)/n}$ for some suitably large constant $C$.}
Let
\begin{eqnarray*}
T= \frac{\log(n)}{(1-\lambda)^3}.
\end{eqnarray*}
Let $\COV(G)=\max_{u \in V} \COV(u)$ be the cover time of $G$ by a \COBRA process with branching factor $k=2$. Then $COV(G)=\bigO(T)$ and for all $u \in V$, w.h.p. $\cov(u)=\bigO(T)$.
\end{theorem}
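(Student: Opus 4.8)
The plan is to prove the theorem through the \COBRA--\BIPS duality, which converts the cover-time question into a question about how fast a single epidemic fills the graph. Writing the event $\{\cov(u)>T\}$ as ``some vertex $z$ is not covered by time $T$'' and applying a union bound gives $\Pr[\cov(u)>T]\le\sum_{z}\Pr[\,z\notin\bigcup_{t\le T}C_t\mid C_0=\{u\}\,]$. By the duality each term equals the probability that the \BIPS process with persistent source $z$ has not infected $u$ by step $T$. Hence it suffices to prove one clean statement: for \emph{every} source, \BIPS infects all of $V$ within $\bigO(T)$ steps, both in expectation and with failure probability at most $n^{-c}$ for a large constant $c$. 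The union bound over the $n$ sources then yields the w.h.p.\ bound on $\cov(u)$, and a standard restart/geometric-tail argument (a window of length $\bigO(T)$ succeeds with probability $1-n^{-c}$ irrespective of the state, so the expected number of windows needed is $1+o(1)$) upgrades this to $\COV(G)=\bigO(T)$.

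The engine of the \BIPS analysis is a one-step spectral drift estimate for the uninfected set $B_t=V\setminus I_t$. Conditioned on $I_t$, each vertex is (re)infected at step $t+1$ independently, and with $P=A/r$ one gets
\[
\E\big[\,|B_{t+1}|\ \big|\ I_t\,\big]\ \le\ \sum_{u\in V}\Big(\tfrac{|B_t\cap N(u)|}{r}\Big)^{2}=\big\|P\mathbf{1}_{B_t}\big\|^{2}\ \le\ |B_t|\Big(\lambda^{2}+(1-\lambda^{2})\tfrac{|B_t|}{n}\Big),
\]
the last inequality coming from splitting $\mathbf{1}_{B_t}$ into its projection onto the all-ones vector and the orthogonal part and using $\|Px^{\perp}\|\le\lambda\|x^{\perp}\|$. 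Read from the top down this says the uninfected set contracts by a factor $\lambda^{2}=1-\Theta(1-\lambda)$ once $I_t$ is a constant fraction of $V$; read symmetrically it says $|I_t|$ grows by a factor $2-\lambda^{2}=1+\Theta(1-\lambda)$ while $I_t$ is small. Since the infection indicators are conditionally independent, each step also carries a Chernoff bound, which I would use to convert these expectation statements into w.h.p.\ statements.

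I would then split the evolution into three size regimes separated by two thresholds: a fluctuation scale $s^{\star\star}=\Theta\big(1/(1-\lambda)^{2}\big)$, below which the one-step standard deviation $\sim\sqrt{|I_t|}$ exceeds the drift $\sim(1-\lambda)|I_t|$; and a concentration scale $s^{\star}=\Theta\big((\log n)/(1-\lambda)^{2}\big)$, above which a single-step Chernoff bound controls the growth rate to within $n^{-c}$. Above $s^{\star}$ the drift estimate plus per-step concentration give reliable multiplicative growth at rate $1+\Theta(1-\lambda)$ up to a constant fraction of $n$, after which the contraction form of the same estimate drives $B_t$ down at rate $\lambda^{2}$; the last few uninfected vertices are cleared using $\E|B_T|\le n^{-c}$ with Markov, together with the observation that an isolated uninfected vertex is infected with probability $1$ at the next step. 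Each of these upper regimes costs only $\bigO\big((\log n)/(1-\lambda)\big)$ steps, and the hypothesis $1-\lambda\gg\sqrt{(\log n)/n}$ is exactly what makes $s^{\star}\ll n$, so that the concentrated regime spans almost the entire range of sizes.

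The main obstacle, and the source of the higher powers of $1/(1-\lambda)$ in $T$, is the ignition regime $|I_t|\lesssim s^{\star\star}$. Here \BIPS behaves like a branching process whose guaranteed excess mean is only $\Theta(1-\lambda)$, so its fluctuations are comparable to its drift: the infection is non-monotone (an infected vertex can lose its infection) and the active set can repeatedly collapse almost to nothing. The persistent source is indispensable precisely here, since it forbids total extinction and continually re-seeds the process. I would control this regime not by the (here inapplicable) multiplicative growth bound but by a robust additive-drift argument: because $|I_t|\ge1$, the estimate above gives $\E[\,|I_{t+1}|-|I_t|\mid I_t\,]\ge\Omega(1-\lambda)$ throughout, and an optional-stopping/supermartingale bound then shows that the expected hitting time of level $s^{\star}$ is $\bigO\big(s^{\star}/(1-\lambda)\big)=\bigO\big((\log n)/(1-\lambda)^{3}\big)=\bigO(T)$, with the renewals supplied by the source furnishing the exponential tail needed for the w.h.p.\ conclusion. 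Making this rigorous despite the non-monotonicity and the dependence between successive re-seedings is the delicate heart of the proof; everything else reduces to the two expectation inequalities above together with routine Chernoff concentration.
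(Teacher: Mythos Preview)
Your plan matches the paper's proof almost step for step: the duality (Theorem~\ref{CoBips}), the union bound over targets, the restart argument for expectation (Equation~\eqref{eqn:COVExpectedValue}), and the spectral one-step estimate $\|P\mathbf{1}_B\|^2\le(1-\lambda^2)|B|^2/n+\lambda^2|B|$ are exactly the paper's Lemma~\ref{Eval} (stated there for $A$ rather than $B$, which is equivalent). Your upper two regimes correspond precisely to Lemmas~\ref{lemma:fastphase2} and~\ref{lemma:expanderPhase3}.

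The one genuine difference is how you handle the ignition regime $|I_t|\le s^{\star}=\Theta((\log n)/(1-\lambda)^2)$. The paper does not use additive drift plus optional stopping; instead Lemma~\ref{lemma:fastphase1} controls the moment generating function directly, bounding $G_t(\phi)=\E\big[e^{-\phi(A_t-A_0)}\mathbf{1}_{\{A_s\le m\,\forall s<t\}}\big]$ for a carefully chosen $\phi=\log(1+(1-\lambda)/2)$ and showing $G_t(\phi)\le e^{t(\log(1+x)-x)}$ with $x=(1-\lambda)/2$. This gives the tail $\Pr(A_s\le m\ \forall s\le t)\le n^{-C}$ in one shot, without a separate renewal/restart step. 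Your additive-drift route is also valid (cap the process at $s^{\star}$ to avoid overshoot, apply OST to get $\E[\tau]=O(s^{\star}/(1-\lambda))$, then use the persistent source to restart for the tail), but it is less direct: the paper's MGF computation exploits the conditional independence of the infection indicators in exactly the same way your Chernoff step does, just at smaller scales, and delivers the high-probability bound immediately. Either way the bottleneck is $O((\log n)/(1-\lambda)^3)$, so the final $T$ is the same.
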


The COBRA process is a type of a multiple random walk processes, so it is tempting to
try to analyse COBRA using techniques developed for such processes. Previous work
on multiple random walks includes \cite{%
DBLP:journals/cpc/AlonAKKLT11,%
DBLP:conf/stoc/BroderKRU89,%
DBLP:journals/siamdm/CooperFR09,%
DBLP:conf/icalp/ElsasserS09}, where cover times were analysed for various classes of graphs.
The analysis of the COBRA process given in Dutta et al.~\cite{COBRA,Dutta:2015:CRW:2821462.2817830} 
uses  a number of tools
from multiple random walks, but applicability of those tools turns out to be limited because
the random walks in COBRA are highly dependent.
In order to prove Theorem \ref{Th1}, we introduce a related epidemic process BIPS, which
is a dual of \COBRA under time reversal, and work on this new process instead of the original one.
The formal duality between \BIPS and \COBRA used in the proof of Theorem \ref{Th1}
is established  in Theorem \ref{CoBips}.

\textbf{Biased Infection with Persistent Source (BIPS):} Consider a graph $G = (V,E)$ and an integer $k \geq 1$. Consider a vertex $v$ which is the source of an infection. We consider the process $A_t$ defined by $A_0 = \{v\}$. Given $A_t$ each vertex $u \in V$, other than $v$,  independently and uniformly with replacement selects $k$ neighbours and becomes a member of $A_{t+1}$ if and only if at least one of the selected neighbours is in $A_t$. Additionally, $v \in A_t$ for all $t \geq 0$.
We call $A_t$ the infected set at time $t$. Observe the source $v$ is always infected.
Finally, if $A_0 = \{v\}$ then it is clear that $v$ is the source of the infection process.

The \BIPS  process is a discrete epidemic process of the SIS (Susceptible-Infected-Susceptible) type,
in which vertices (other than the source $v$) refresh their infected state at each step
by contacting  $k$ randomly chosen neighbours. The presence of a persistent (or corrupted) source
means that w.h.p. all vertices of the underlying graph eventually become infected.
The \BIPS process is of independent interest since
in the context of epidemics, certain viruses exhibit the property that a particular host can become persistently infected. For example,
in animals the BVDV (Bovine viral diarrhea virus) is of this type
and a model that mimics the spread of BVDV
was described in~\cite{IMBG}.
The model is able to simulate the spread of infection when a persistently infected animal is introduced
into an infection-free herd.

If we define $\inf(v)$ as the first time when all vertices are infected when the source is $v$,
 $\INF(v) = \E(\inf(v))$, and $\INF(G)= \max_{u \in V} \INF(v)$, then we have the analogue theorem of Theorem \ref{Th1}.

\begin{theorem}\label{Th2}
Let $G$ be a connected $n$-vertex $r$-regular graph with $1-\lambda \gg \sqrt{(\log n)/n}$. Then  for every
$v \in V$, the infection time, $\inf(v)$, by a \BIPS process with $k=2$ satisfies $\inf(v) = \bigO(\log(n)/(1-\lambda)^3)$ in expectation and with probability at least $1-\bigO(1/n^3)$. Moreover $\INF(G)= \bigO(\log(n)/(1-\lambda)^3)$.
\end{theorem}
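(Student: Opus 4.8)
The plan is to follow the size of the infected set $A_t$ (equivalently the uninfected set $B_t=V\setminus A_t$) through the process and to control its one-step behaviour by the spectral gap. Conditioned on $A_t=S$, the choices of the vertices $u\neq v$ are independent, and $u\in A_{t+1}$ with probability $1-(1-d_S(u)/r)^2$, where $d_S(u)=|S\cap N(u)|$. Summing over $u$, using $\sum_u d_S(u)=r|S|$ together with the spectral bound $\sum_u d_S(u)^2=\|A(G)\mathbf 1_S\|^2\le r^2\big(|S|^2/n+\lambda^2|S|(1-|S|/n)\big)$, gives
\[
\E\big[\,|A_{t+1}|\;\big|\;A_t=S\,\big]\;\ge\;|S|+(1-\lambda^2)\,|S|\,(1-|S|/n)-\bigO(1).
\]
The analogous computation for $B_t$ (a vertex stays uninfected with probability $(b(u)/r)^2$, $b(u)=|B\cap N(u)|$) yields $\E[\,|B_{t+1}|\mid B_t\,]\le \lambda^2|B|+|B|^2/n$. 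Thus while $|S|\le n/2$ the infected set grows in expectation by a factor at least $1+\tfrac12(1-\lambda)$, and while $|B|\le n/2$ the uninfected set shrinks in expectation by a factor at most $1-\tfrac12(1-\lambda)$. Since the per-vertex indicators are conditionally independent, a Chernoff bound upgrades each mean estimate to a statement holding with probability $1-n^{-3}$ whenever the relevant mean is $\Omega(\log n)$.

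Given these estimates the two bulk phases are routine. Set $s_0=C(\log n)/(1-\lambda)^2$. For $|A_t|\ge s_0$ the expected number of newly infected vertices is $\Omega(\log n)$, so Chernoff gives $|A_{t+1}|\ge (1+\tfrac14(1-\lambda))|A_t|$ w.h.p.; iterating, $|A_t|$ climbs from $s_0$ to $n/2$ in $\bigO((\log n)/(1-\lambda))$ steps. Symmetrically the shrinkage estimate drives $|B_t|$ from $n/2$ down to $s_0$ in $\bigO((\log n)/(1-\lambda))$ steps. For the last few uninfected vertices, where the means are too small for Chernoff, I would abandon concentration and use a pure first-moment argument: once $|B_t|\le s_0$ the inequality $\E[\,|B_{t+1}|\mid B_t\,]\le(1-\tfrac12(1-\lambda))|B_t|$ iterates to $\E[\,|B_{t+s}|\,]\le(1-\tfrac12(1-\lambda))^{s}\,n$, which is below $n^{-3}$ after $s=\bigO((\log n)/(1-\lambda))$ further steps; Markov's inequality then gives $B_{t+s}=\emptyset$, i.e.\ $A=V$, w.h.p., and $A=V$ is an absorbing state. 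It is here that the hypothesis $1-\lambda\gg\sqrt{(\log n)/n}$ enters: it guarantees $s_0\le n/2$, so that the Chernoff growth phase is non-empty.

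The crux is the early phase, when $1\le|A_t|\le s_0$. Here the guaranteed one-step increase, of order $(1-\lambda)|A_t|$, is no larger than the standard deviation of $|A_{t+1}|$, which is of order $\sqrt{|A_t|}$, so Chernoff is useless and the size process behaves like a near-critical branching process with mean offspring $1+\Theta(1-\lambda)$. The persistent source keeps $v$ infected and so prevents permanent extinction, but the infection can repeatedly collapse back towards $v$ before it ever reaches $s_0$. The natural approach, which I would pursue, is to bound the size process from below by such a branching process and to argue in terms of excursions from the source: a single excursion reaches size $s_0$ with probability $\Omega(1-\lambda)$, and, crucially, after a failed excursion the set still contains $v$, so the attempt can be repeated. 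Controlling this lower tail—rather than the mean—of a near-critical process reflected at the source is the main technical obstacle, and it is this phase that is responsible for the two extra powers of $1/(1-\lambda)$ beyond the $(\log n)/(1-\lambda)$ cost of the bulk phases; bounding the length and the success probability of each excursion so that $\bigO((\log n)/(1-\lambda))$ attempts suffice, each of length $\bigO(1/(1-\lambda)^2)$, yields the overall budget of $\bigO((\log n)/(1-\lambda)^3)$ steps.

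Finally I would assemble the phases. A union bound over the $\bigO((\log n)/(1-\lambda)^3)$ steps and over the $\bigO(\log n)$ failure events shows that $A_t=V$ within $T=\bigO((\log n)/(1-\lambda)^3)$ steps with probability $1-\bigO(1/n^3)$, giving the high-probability statement. To pass to $\INF(v)=\E[\inf(v)]$ I would use monotonicity of \Bips: coupling the neighbour choices, a larger infected set never infects more slowly, so from any configuration containing $v$ the residual infection time is stochastically dominated by that of a fresh start from $\{v\}$. Splitting time into consecutive blocks of length $T$, each block fails to complete the infection with conditional probability $\bigO(1/n^3)$ given the past; the resulting geometric tail gives $\E[\inf(v)]=\bigO(T)$ uniformly in $v$, and hence $\INF(G)=\max_v\INF(v)=\bigO((\log n)/(1-\lambda)^3)$, as claimed.
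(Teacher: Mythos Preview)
Your high-level architecture matches the paper's: the same spectral lower bound on $\E[|A_{t+1}|\mid A_t]$ (the paper's Lemma~\ref{Eval}), a Chernoff-driven bulk growth phase from $s_0=\Theta(\log n/(1-\lambda)^2)$ up to a constant fraction of $n$, a first-moment/Markov argument to finish off the last uninfected vertices, and the geometric-blocks argument for the expectation. The bulk and end phases are handled in the paper essentially as you describe (Lemmas~\ref{lemma:fastphase2} and~\ref{lemma:expanderPhase3}), with one caveat: your iteration of $\E[\,|B_{t+1}|\mid B_t\,]\le(1-\tfrac12(1-\lambda))|B_t|$ to an unconditional bound on $\E[\,|B_{t+s}|\,]$ tacitly assumes $|A_t|$ never drops back below $n/2$; the paper makes this explicit by conditioning on the high-probability event $\{A_s\ge qn\ \forall s\le t\}$ and bounding the complementary contribution $n\,\Prob(A_t<qn)$ separately.

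The real gap is the early phase. You correctly identify it as the crux and give the right intuition (near-critical growth, reflection at the source), but you stop at a plan, and the branching-process coupling you propose is not straightforward. In \BIPS\ the indicator $\ind{\{u\in A_{t+1}\}}$ depends on the whole set $A_t$, not on a single ``parent'', and two infected vertices sharing a neighbour do not produce independent offspring; there is no evident domination of $|A_t|$ from below by a Galton--Watson process. Your excursion accounting (success probability $\Omega(1-\lambda)$, length $\bigO(1/(1-\lambda)^2)$ per attempt) therefore remains heuristic, and you yourself flag it as ``the main technical obstacle''.

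The paper sidesteps the coupling entirely with an exponential-moment argument (Lemma~\ref{lemma:fastphase1}). It works directly with
$G_t(\phi)=\E\big[e^{-\phi(|A_t|-|A_0|)}\,\ind{\{|A_s|\le m\ \forall s<t\}}\big]$
and exploits the conditional product structure of the Bernoulli indicators to get the one-step contraction
$G_t(\phi)\le e^{\log(1+x)-x}\,G_{t-1}(\phi)$ for $x=(1-\lambda)/2$, i.e.\ a decay of order $e^{-\Theta((1-\lambda)^2)}$ per step. Markov's inequality then gives $\Prob(|A_s|\le m\ \forall s\le t)\le e^{\phi m}G_t(\phi)$, and with $m=\Theta(\log n/(1-\lambda)^2)$ this forces $t=\Theta(m/(1-\lambda))=\Theta(\log n/(1-\lambda)^3)$. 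This single MGF computation replaces your entire excursion analysis and requires no comparison with an auxiliary process; it is the missing ingredient in your proposal.
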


Although Theorem \ref{Th1} is proved for a \COBRA process with
branching factor $k=2$, it seems natural to ask if, for expanders, a cover time of $\bigO(\log n)$ can be obtained with less branching. Clearly $k=1$ is not enough; the cover time of any $n$ vertex graph by a random walk is $\Om(n \log n)$. Suppose that at the start of each step, each particle divides in two with probability $\r$.
This gives an expected branching factor of $1+\r$. The following result shows that any constant $\r>0$ will do. The proof of Theorem \ref{Th3} follows from the
proof of Theorem \ref{Th1}, by using Corollary \ref{Corol} of Lemma \ref{Eval} in Section \ref{EA}.

\begin{theorem}\label{Th3}
Let $G$ be a connected $n$-vertex $r$-regular graph with $\lambda<1$ constant and let $v \in V(G)$. Provided  $\r > 0$ constant, for all $v \in V$, the cover time of $G$ by the \COBRA process with branching factor $1+\r$ and $C_0 = \{v\}$ is $\cov(v)=\bigO(\log(n))$ rounds in expectation and with high probability.
\end{theorem}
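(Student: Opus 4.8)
The plan is to mirror the proof of Theorem~\ref{Th1} and change only the single step where the branching factor enters quantitatively. Recall that the proof of Theorem~\ref{Th1} does not reason about \COBRA directly: it passes to the dual \BIPS process via the duality of Theorem~\ref{CoBips} and bounds the infection time there. For the modified \COBRA with expected branching factor $1+\rho$ --- each active vertex pushes to two neighbours with probability $\rho$ and to a single neighbour otherwise --- I would first introduce the matching dual: a \Bips-type process in which each non-source vertex, independently at each step, selects two neighbours with probability $\rho$ and one neighbour with probability $1-\rho$, and becomes infected if at least one selected neighbour was infected in the previous step. The construction behind Theorem~\ref{CoBips} is pointwise: it reverses time and depends only on the per-vertex push distribution, not on the fact that every vertex pushes exactly twice. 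The same argument therefore yields a duality between the $1+\rho$ process and this randomized-$k$ \Bips. It thus suffices to show that the randomized-$k$ \BIPS infects every vertex in $\bigO(\log n)$ steps when $\l<1$ is constant.

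The only place in the Theorem~\ref{Th1} analysis where the value of the branching factor is used is through the spectrum of the linearized one-step infection operator. For \BIPS with $k=2$ this operator is, to first order, $2P$: writing $x$ for the vector of infection probabilities, a non-source vertex $u$ is infected with probability $1-(1-(Px)_u)^2 = 2(Px)_u - \bigO((Px)_u^2)$, and Lemma~\ref{Eval} records that the relevant eigenvalues are governed by $2\l_i$. In the randomized version the one-step map is the $\rho$-weighted average of the $k=2$ term $1-(1-(Px)_u)^2$ and the $k=1$ term $(Px)_u$, which linearizes to
\begin{eqnarray*}
\rho\,\bigl(2(Px)_u\bigr) + (1-\rho)\,(Px)_u = (1+\rho)(Px)_u .
\end{eqnarray*}
Hence the governing operator is $(1+\rho)P$ up to lower-order corrections, its leading eigenvalue is $1+\rho>1$, and its spectral gap is bounded below by a constant because $\l<1$ is constant. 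This is exactly the statement I would isolate as Corollary~\ref{Corol} of Lemma~\ref{Eval}, to be invoked in place of the $k=2$ eigenvalue bound.

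With this spectral input the rest of the Theorem~\ref{Th1} argument goes through after replacing the growth rate $2$ by $1+\rho$ throughout: during the sublinear regime the infected set grows geometrically at a rate bounded below by a constant exceeding $1$, the infection spreads and mixes in $\bigO(1/(1-\l))=\bigO(1)$ steps, and the last vertices are covered within a further $\bigO(\log n)$ steps, so the phase lengths sum to an $\bigO(\log n)$ infection time in expectation and with high probability. Transferring back through the duality gives $\cov(v)=\bigO(\log n)$, as claimed. The step I expect to demand the most care is not the eigenvalue computation, which is packaged in Corollary~\ref{Corol}, but re-checking the concentration and coupling arguments of the Theorem~\ref{Th1} proof, which were written for deterministic $k=2$ branching: one must confirm that the extra independent per-vertex coin flips preserve the independence structure the duality relies on, and that the geometric growth at rate $1+\rho$ holds with the required concentration rather than merely in expectation. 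Since the branching choices are independent across vertices and the dual is defined locally, I expect these verifications to be routine, leaving the single spectral estimate of Corollary~\ref{Corol} as the genuinely new ingredient.
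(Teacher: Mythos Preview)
Your proposal is correct and matches the paper's approach: the paper states that Theorem~\ref{Th3} follows from the proof of Theorem~\ref{Th1} by substituting Corollary~\ref{Corol} for Lemma~\ref{Eval}, and your outline expands exactly this --- extend the duality of Theorem~\ref{CoBips} to the randomized-$k$ \BIPS, replace the one-step growth bound by the $\rho$-version in Corollary~\ref{Corol}, and rerun Sections~\ref{SmallPhase}--\ref{Fast-2} with the modified constants. Your extra care about independence in the duality and concentration steps is appropriate but, as you anticipate, routine.
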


\subsection*{Overview of the proof of Theorem \ref{Th1}}

For a connected graph $G=(V,E)$, denote $\Hit_C(v) = \min\{t: v \in C_t, C_0=C\}$ the hitting time of vertex $v$ for a \COBRA process starting from $C$.
In particular, $\Hit_u(v)$ is the hitting time of $v$ starting from $u$. Suppose there exists $T >0 $ such that for any fixed pair $(u,v)$ it holds that $\Hit_u(v) >T$ with probability at most $\bigO(1/n^2)$. By the union bound we get  $\cov(u) = \max_{v \in V} \Hit_u(v)>T$ with probability at most $\bigO(1/n)$. Thus $\cov(u) < T$ with high probability. This results also holds in expectation. To see this, observe that by restarting the process after $T$ steps with any of the existing particles, we have that
\begin{eqnarray}
\COV(u) & \le & T + \bigO(1/n) 2T + \cdots +\: \bigO(1/n^j) (j+1) T + \cdots \; = \; \bigO(T). \label{eqn:COVExpectedValue}
\end{eqnarray}
Thus obtaining that $\COV(u) = \bigO(T)$. In our case $T = \bigO(\log(n)/(1-\lambda)^3)$.

We give an overview of the Proof of Theorem \ref{Th1}.
\begin{enumerate}
\item In Section \ref{Models} we prove Theorem \ref{CoBips}
which relates hitting times in the \COBRA process to the membership of elements of the infected set at a given step of the \BIPS process as follows
\begin{equation}\label{bkp09e23}
\Prob(\Hit_u(v)> t) = \Prob(u \not \in A_t | A_0 = \{ v\} ).
\end{equation}
The probability on the left hand side is for \COBRA process staring from $u$, and on the right hand side for a \BIPS process with persistent source $v$, and where  $A_t$ the infected set at step~$t$.

\item From Theorem~\ref{Th2} we obtain
that for $T$ as above,
$\Prob(u \not \in A_T | A_0 = v) \le \Prob(A_T \neq V | A_0 = v) = \bigO(1/n^3)$.
This together with the above duality implies Theorem~\ref{Th1}.
Sections~\ref{EA}--\ref{Fast-2} are devoted to the proof of Theorem~\ref{Th2}.

\item In Section \ref{EA} we prove a lower bound on $\E(|A_{t+1}| \mid |A_t|)$, the expected
size of the infected set at round $t+1$ of a \BIPS process given the size at step $t$.

\item
The proof of Theorem~\ref{Th2} is split into two parts.
In Section~\ref{SmallPhase} we show that
in $\bigO(T)$ rounds the infected set $A_t$ increases its size from $1$ to $\Omega(T)$
(we refer here only to the case when $1-\l$ is constant).
In Section~\ref{Fast-2} we prove that  in $\bigO(T)$ extra rounds the whole graph is infected.
Both parts of the proof of Theorem~\ref{Th2} use the bound on $\E(|A_{t+1}| \mid |A_t|)$.

\end{enumerate}

\textbf{Notation.}
For a vertex $x\in V$, $N(x)$ denotes the set of neighbours of $x$
and $d(x) = |N(x)|$. More generally,
for $A \subseteq V$, $d_A(x) = |N(x) \cap A|$.
We also define $d(A) = \sum_{x \in A} d(x)$. If there is no ambiguity, we write $A$ in place of $|A|$.
Let $\G(A)=\{x \in V: \exists u \in A, \{x,u\} \in E(G)\}$
be the inclusive neighbourhood of $A$.
By replacing each undirected edge $\{x,u\}$ with directed edges $(x,u)$ and $(u,x)$
and counting the edges with the second vertex in $A$, it follows that
 $\sum_{u \in \G(A)}d_A(u)=r |A|$.

\section{Duality between COBRA and BIPS processes}\label{Models}

The intuition for the next theorem can be seen as follows. Replace each edge in the graph 
by a pair of directed edges. Let $v$ be a distinguished vertex. Delete all out edges of $v$ 
and replace them with a loop. Let $k=1$. Any random walk which arrives at $v$ at or before 
step $t$ remains at $v$. A walk $W$ starting at $u$ which arrives at $v$ at step $s<t$
corresponds to a BIPS process which remains at $v$ for $t-s$ steps and then follows 
the edges of $W$ back to $u$ in $s$ steps.
This correspondence generalizes to the \COBRA and \BIPS processes with the parameter $k \ge 2$.

Recall that $(C_t)_{t \ge 0}$ and $(A_t)_{t \ge 0}$ denote the
\COBRA and \BIPS processes, respectively.
To avoid confusion, we use the notation $\Prob( \cdot)$ for probabilities in the BIPS process, and
$\wh \Prob(\cdot)$ in the COBRA process.
Our main Theorem~\ref{Th1} follows from the duality between these two processes expressed in~\eqref{bkp09e23}.
To prove~\eqref{bkp09e23}, we generalize this relation in the following theorem to a form which is convenient
for an inductive proof.
To simplify notation, we will write ``$A_0 = v$'' for the frequently appearing condition ``$A_0 = \{ v \}$.''


\begin{theorem}\label{CoBips}
Let $G$ be a connected regular graph and consider the \COBRA and \BIPS processes on $G$ with parameter $k \ge 1$.
For each $v \in V$, $C \subseteq V$ and $t \geq 0$ we have
$$\wh \Prob(\Hit_C(v)> t| C_0 = C) \;\; = \; \; \Prob(C \cap  A_t = \emptyset| A_0 = v).$$
\end{theorem}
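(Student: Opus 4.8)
The plan is to prove the identity by induction on $t$, matching the \emph{first} step of the \COBRA walk started at $C$ against the \emph{last} step of the \BIPS process ending at time $t+1$; this is the time-reversal suggested by the informal description above. In both processes only the ``avoidance'' events matter, and the whole argument rests on a single elementary one-step computation. First note that both sides vanish when $v\in C$: on the left $\Hit_C(v)=0$, while on the right $v\in A_{t+1}$ (the source is permanently infected) forces $C\cap A_{t+1}\neq\emptyset$. So I may assume $v\notin C$ and carry a factor $\indi{v\notin C}$. The base case $t=0$ is immediate, since $\Hit_C(v)>0$ means $v\notin C_0=C$, and $C\cap A_0=C\cap\{v\}=\emptyset$ likewise means $v\notin C$; both sides equal $\indi{v\notin C}$.

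The key one-step fact is the following. For a fixed set $B\subseteq V$ and a vertex $w$, the probability that all $k$ of $w$'s independent uniform neighbour choices miss $B$ is $((r-d_B(w))/r)^k$. Applying this independently over the vertices of $C$ gives, on the \COBRA side,
\[
\wh\Prob\!\left(C_1\cap B=\emptyset \mid C_0=C\right)=\prod_{w\in C}\left(\frac{r-d_B(w)}{r}\right)^{\!k},
\]
and, on the \BIPS side with $v\notin C$, the same product equals $\Prob(C\cap A_{t+1}=\emptyset\mid A_t=B)$, because a vertex $u\in C$ (necessarily $u\neq v$) stays uninfected exactly when all of its choices miss $A_t=B$, and distinct vertices choose independently. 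Thus one forward \COBRA step ``avoiding $B$'' and one \BIPS step in which $C$ ``escapes infection from $B$'' have identical probability.

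For the inductive step I condition the left-hand side on the first \COBRA step $C_1=C'$. Since $\Hit_C(v)>t+1$ holds exactly when $v\notin C$ and the walk restarted from $C_1$ avoids $v$ for a further $t$ rounds, the Markov property and the induction hypothesis give
\begin{align*}
\wh\Prob\!\left(\Hit_C(v)>t+1\mid C_0=C\right)
&=\indi{v\notin C}\sum_{C'}\wh\Prob(C_1=C'\mid C_0=C)\,\wh\Prob(\Hit_{C'}(v)>t\mid C_0=C')\\
&=\indi{v\notin C}\sum_{C'}\wh\Prob(C_1=C'\mid C_0=C)\,\Prob(C'\cap A_t=\emptyset\mid A_0=v).
\end{align*}
I then expand $\Prob(C'\cap A_t=\emptyset\mid A_0=v)=\sum_B\Prob(A_t=B\mid A_0=v)\,\indi{C'\cap B=\emptyset}$, interchange the sums over $C'$ and $B$, and use $\sum_{C'}\wh\Prob(C_1=C'\mid C_0=C)\,\indi{C'\cap B=\emptyset}=\wh\Prob(C_1\cap B=\emptyset\mid C_0=C)$ together with the one-step identity to rewrite this as $\Prob(C\cap A_{t+1}=\emptyset\mid A_t=B)$. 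Summing over $B$ and invoking the Markov property of \BIPS reassembles $\Prob(C\cap A_{t+1}=\emptyset\mid A_0=v)$, which is exactly the right-hand side at $t+1$.

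The routine parts are the independence computation behind the one-step identity and the summation interchange. The only genuinely delicate points, which I would state carefully, are the time shift—the first \COBRA step is paired with the \emph{last} \BIPS step, so the induction hypothesis is applied to the intermediate set $C'$ at time $t$, not at time $t+1$—and the asymmetric role of the source: $v$ makes no choices in \BIPS, which is what the factor $\indi{v\notin C}$ encodes and what makes the two one-step probabilities coincide only after restricting to $v\notin C$. Everything else is bookkeeping.
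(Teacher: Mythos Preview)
Your proof is correct and takes essentially the same approach as the paper: induction on $t$, handling $v\in C$ trivially, and pairing the first \COBRA step with the last \BIPS step via the independence of the per-vertex choices in $C$. The only cosmetic difference is that the paper conditions from the \BIPS side on the random set $X(C)=\bigcup_{u\in C}B_u$ selected by $C$ at step $t{+}1$ and identifies its law directly with that of $C_1$ in \Cobra, whereas you condition first on $C_1=C'$, then on $A_t=B$, and match the two one-step avoidance probabilities through the explicit product $\prod_{w\in C}((r-d_B(w))/r)^k$ before interchanging sums; the content is identical.
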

\begin{proof}
Observe that the claim is trivial if $v \in C$, since both probabilities are 0. We assume that $v \not \in C$ and proceed by induction on $t$.
For $t = 0$ the claim is true because both probabilities are 1.
Assume the claim is true for a fixed $t \ge 0$, we will prove it for $t+1$.

Consider the BIPS process at step $t+1$.
Denote by $B_x$ the random $k$-set of neighbours chosen by vertex $x$.
Note that $B_v=\{v\}$ always.
Define $X(C)= 
\bigcup_{x \in C} B_x$.
It is an assumption of the model that, at step $t+1$,  for any fixed set $B$,
the event $X(C)=B$
is independent of $A_t$ and thus of the event $B \cap A_t=\emptyset$. Thus,
\begin{eqnarray*}
\Prob(B \cap A_t = \emptyset \text{ and } X(C)=B | A_0=v)
 & =& \Prob(B \cap A_t  = \emptyset  | A_0=v)\Prob( X(C)=B | A_0=v).
\end{eqnarray*}
Let $N^X(C)$ be the set of possible $B=X(C)$ generated by $C$ in one step:
$N^X(C) = \{ B\subseteq V:\: \Prob(X(C) = B) > 0\}$.
The events $X(C)=B$ are mutually exclusive and exhaustive, so
\begin{eqnarray*}
\Prob(C \cap  A_{t+1} = \emptyset| A_0 = v)
 & = & \!\!\!\! \sum_{B \in N^X(C)} \Prob( B \cap A_{t} = \emptyset| A_0 = v) \Prob (X(C)=B| A_0=v).
\end{eqnarray*}
For any set $B$, the induction hypothesis gives
\begin{eqnarray}
\Prob \left( B \cap A_{t} = \emptyset| A_0 = v \right) &=& \wh \Prob(\Hit(v) > t| C_0 =B)\nonumber.
\end{eqnarray}
Let $B=Y(C)$ be the random set chosen (to push to) by $C$ in the COBRA process, and let
$N^Y(C)$ be the set of possible $B$ resulting from this.
If $v \not \in C$, then for any $u \in C$
\[
\Prob(X(u)=B_u)= \wh \Prob(Y(u)=B_u).
\]
The events $B_u, u \in C$ are independent, so
\[
\Prob (X(C)=B| A_0=v)= \wh \Prob (Y(C)=B).
\]
Moreover $N^X(C)=N^Y(C)$. Thus for $v \not \in C$,
\begin{eqnarray*}
\Prob(C \cap  A_{t+1} = \emptyset| A_0 = v)
   &=& \!\!\!\!
\sum_{B \in N^X(C)} \Prob( B \cap A_{t} = \emptyset| A_0 = v) \Prob (X(C)=B| A_0=v)\\
  &=& \!\!\!\!
\sum_{B \in N^Y(C)} \wh \Prob(\Hit(v) > t| C_0 =B) \wh \Prob (Y(C)=B)\\
&=& 
 \wh \Prob(\Hit(v) > t+1| C_0 =C).
 \end{eqnarray*}
\end{proof}

\begin{sloppy}

\begin{proofof}{Theorem \ref{Th1}} 
From Theorem~\ref{CoBips} for $C = \{ u \}$
and the union bound we get that for any two vertices $u,v~\in~V,$ and any $T \ge 0$,
\begin{eqnarray*}
\wh \Prob(\Hit_u(v)> T) &=& \Prob(u \not\in A_T| A_0 = v) \\
  & \leq & \Prob(A_T \neq V| A_0 = v) \\
  &  = & \Prob(\inf(v) > T).
\end{eqnarray*}
For $T = \bigO((\log n)/(1 - \l)^3)$, 
Theorem~\ref{Th2} tells us that  $\Prob(\inf(u) > T) = \bigO(1/n^3)$, so $\wh\Prob(\Hit_u(v) > T) = \bigO(1/n^3)$. Therefore
$$\Prob(\cov(u)> T) \;\; \leq \;\; \sum_{v \in V} \Prob(\Hit_u(v) > T) \; = \; \bigO\left(\frac{1}{n^2}\right).$$
Concluding that $\cov(u) \leq T$ with probability at least $1- O(1/n^2)$. 
From Equation~\eqref{eqn:COVExpectedValue}~we get the result in expectation.
\end{proofof}

\end{sloppy}

\ignore{
We obtain the following corollary. Let $\tau_v = \min\{t \geq 0 : A_t = V\}$, i.e. $\tau_I$ is the first time such that the whole graph is infected $A_t$. 

\begin{corollary}\label{cor:union1}
If $\tau_v >T$ with probability at most $o(1/n^2)$, then $\MHit(u) > T$ with probability at most $o(1/n)$.
\end{corollary}

\[
\Prob(\Hit(v) > T|C_0 = \{u\})=
 \Prob(u \not \in A_T |A_0 = \{v\}) \leq o(1/n^2).
\]
From there
\[
\Prob(\MHit(u) > T) \leq \sum_{v \in V} \Prob(\Hit(v)>T|C_0 = u) = o(1/n)
\]

 \begin{proof}
Take $C = 30((2-\lambda^2)/(1-\lambda^2)^2)r$ in Lemma~\ref{lemma:fastphase1}. Then with probability at least $1-\bigO(1/n^2)$ we reach an infection of size at least $C' = 26(2-\lambda^2)/(1-\lambda^2)^2)\log(n)$ which is enough to apply Lemma~\ref{lemma:fastphase2}, thus finishing the infection in $\bigO(\log(n))$ rounds with probability at least $1-\bigO(1/n^2)$. Therefore, the infection process finishes in $\bigO(\log(n))$ rounds with probability at least $1-\bigO(1/n^2)$. From Corollary~\ref{cor:union1} we finish the proof.
\end{proof}

  }

\begin{sloppy}

\section{
Expected growth of the \BIPS process}\label{EA}

\end{sloppy}

The following lemma, which gives a lower bound on the expected increase of infection in one step of the \BIPS process, 
is the basis for our analysis of this process.

\begin{lemma}\label{Eval}
Let $G$ be a connected $r$-regular graph on $n$ vertices,
with $\lambda < 1 $ where $\lambda$ is the absolute second eigenvalue of the random-walk  transition matrix.
Let $A_t$ be the size of the infected set after step $t$ of the \BIPS process with $k=2$, then
\begin{eqnarray}
\E(|A_{t+1}|\mid A_t = A)  \nonumber 
  &\geq&
|A| (1 +(1-\lambda^2)(1-|A|/{n})).
\label{eqn:boundSizeAt|At-1}
\end{eqnarray}
\end{lemma}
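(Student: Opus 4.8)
The plan is to compute the one-step infection probability of each vertex exactly and then reduce the whole statement to a single spectral estimate. Fix $A_t = A$ and a vertex $u \neq v$. In one \BIPS step $u$ draws $k=2$ neighbours independently and uniformly with replacement, and each draw lands in $A$ with probability $d_A(u)/r$ (here $r$ is the common degree). Hence $u$ fails to become infected precisely when both draws miss $A$, so
\[
\Prob(u \in A_{t+1} \mid A_t = A) = 1 - \left(1 - \frac{d_A(u)}{r}\right)^2 = \frac{2d_A(u)}{r} - \frac{d_A(u)^2}{r^2}.
\]
Since the source satisfies $\Prob(v \in A_{t+1})=1 \ge 1-(1-d_A(v)/r)^2$, dropping the special role of $v$ only weakens the estimate, and linearity of expectation gives the clean lower bound
\[
\E(|A_{t+1}| \mid A_t = A) \ge \sum_{u \in V}\left(\frac{2d_A(u)}{r} - \frac{d_A(u)^2}{r^2}\right) = \frac{2}{r}\sum_{u\in V} d_A(u) - \frac{1}{r^2}\sum_{u \in V} d_A(u)^2.
\]
The edge-counting identity from the Notation section, $\sum_{u} d_A(u) = r|A|$, turns the first term into $2|A|$, so the entire claim reduces to an upper bound on $\sum_{u} d_A(u)^2$.

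For the second sum I would pass to linear algebra, and this is the one step with real content. Let $\mathbf{1}_A$ be the indicator vector of $A$ and $P$ the (symmetric, since $G$ is regular) transition matrix; then $(P\mathbf{1}_A)_u = d_A(u)/r$, so $\frac{1}{r^2}\sum_u d_A(u)^2 = \|P\mathbf{1}_A\|^2$. Decompose $\mathbf{1}_A = \frac{|A|}{n}\mathbf{1} + w$ along the top eigenvector $\mathbf{1}$ (eigenvalue $1$) of $P$, with $w \perp \mathbf{1}$; since $\mathbf{1}_A$ is a $0/1$ vector, $\|w\|^2 = \|\mathbf{1}_A\|^2 - |A|^2/n = |A|(1-|A|/n)$. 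Because $P\mathbf{1}=\mathbf{1}$ and $P$ maps $\mathbf{1}^{\perp}$ into itself with operator norm $\lambda$ there, we have $\|Pw\|\le\lambda\|w\|$, and orthogonality of $\mathbf{1}$ and $Pw$ gives
\[
\frac{1}{r^2}\sum_{u \in V}d_A(u)^2 = \left\|\frac{|A|}{n}\mathbf{1} + Pw\right\|^2 = \frac{|A|^2}{n} + \|Pw\|^2 \le \frac{|A|^2}{n} + \lambda^2|A|\left(1-\frac{|A|}{n}\right).
\]

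Substituting this into the earlier lower bound and collecting terms gives $\E(|A_{t+1}|\mid A_t=A) \ge |A|\bigl(2 - \lambda^2 - (1-\lambda^2)|A|/n\bigr)$, which is exactly $|A|(1+(1-\lambda^2)(1-|A|/n))$, the asserted inequality. I expect the main obstacle to be nothing more than getting the spectral step right: the per-vertex probability computation and the final rearrangement are routine, so the crux is the orthogonal decomposition of $\mathbf{1}_A$ and the contraction bound $\|Pw\|\le\lambda\|w\|$ on the complement of the all-ones vector. It is worth noting that the identity $(d_A(u))_u = rP\mathbf{1}_A$ is exactly where the adjacency/transition structure enters, and that the convexity-type loss in replacing $2d_A(u)/r - d_A(u)^2/r^2$ by a quadratic is precisely what the eigenvalue bound controls.
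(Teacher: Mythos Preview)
Your proof is correct and follows essentially the same approach as the paper's: compute the per-vertex infection probability, drop the source to get a clean lower bound, use $\sum_u d_A(u)=r|A|$ for the linear term, and bound $\|P\mathbf{1}_A\|^2$ spectrally. The only cosmetic difference is in the spectral step: the paper expands $\mathbf{1}_A$ in a full orthonormal eigenbasis of $P$ and bounds $\sum_i \lambda_i^2\langle \mathbf{1}_A,f_i\rangle^2$ by $(1-\lambda^2)\langle \mathbf{1}_A,f_1\rangle^2+\lambda^2\|\mathbf{1}_A\|^2$, whereas you split off only the top eigenvector and invoke the operator-norm bound $\|Pw\|\le\lambda\|w\|$ on $\mathbf{1}^\perp$; the two yield the identical inequality $\|P\mathbf{1}_A\|^2\le(1-\lambda^2)|A|^2/n+\lambda^2|A|$.
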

\begin{proof}
Note that $\sum_{u \in \G(A)}d_A(u)=r |A|$. Thus
\begin{eqnarray}
\E (|A_{t+1}| \mid A_t=A)  \label{Lab1} 
  &=& 1 + \sum_{u \in \G(A)\setminus\{v\}} \left( 1-(1-d_A(u)/r)^2 \right)  \\
&\ge& \sum_{u \in \G(A)}\brac{ \frac{2}{r} d_A(u)-\frac{1}{r^2}d_A^2(u)} \\
& = & 2|A|-\frac{1}{r^2}\sum_{u \in \G(A)}d_A^2(u).\label{Lab2}
\end{eqnarray}
\ignore{
We first consider the case that  $|A| \le r$. Thus $d_A(u) \le |A|$ and
$\sum_{u \in \G(A)} d_A^2(u) \le |A| \sum_{u \in \G(A)}d_A(u) = r|A|^2$, giving
\[
\E (|A_{t+1}| \mid A_t=A) \geq |A|(2-|A|/r).
\]
}
For the  inequality in \eqref{eqn:boundSizeAt|At-1} we argue as follows.
Let $P=P(G)$ be the transition matrix of a simple random walk on $G$.
Let $P(x,A)=\sum_{y \in A}P(x,y)=d_A(x)/r$. From \eqref{Lab1}-\eqref{Lab2}
we have
\begin{equation}
\E(|A_{t+1}|\:|A_t = A)
\ge 2A-\sum_{x \in V}P(x,A)^2. \label{eqn:boundSize1}
\end{equation}
Observe that $\sum_{x \in V}P(x,A)^2 = \langle P\ind A,P\ind A \rangle = \|P \ind A \|^2$, where $1_A=(1_{\{ x \in A\}}: x\in V)$ and $P\ind A=(P(x,A): x \in V)$.
As $P$ is symmetric, it has an orthonormal basis of right eigenvectors $f_1,...,f_n$, i.e. $\|f_i\|=1$,
$\langle f_i,f_j \rangle=0$ for $i \ne j$.
For any vector $g$, $g=\sum_{i=1}^n \langle f,f_i \rangle f_i$
and $\| g \|^2 = \sum_{i=1}^{n} \langle g, f_i \rangle^2$.
Here $f_1=(1/\sqrt n)$ is the unique eigenvector  with eigenvalue 1, and
$\langle 1_A,f_1 \rangle= A/\sqrt n$.
Thus
\begin{eqnarray}
\| P\ind A \|^2 &=& \| P\sum_{i=1}^n \langle \ind A, f_i \rangle f_i \|^2
   \;  = \; \| \sum_{i=1}^n \langle \ind A, f_i \rangle Pf_i \|^2 \nonumber \\
 & = & \| \sum_{i=1}^n \langle \ind A, f_i \rangle \lambda_i f_i \|^2
 \; =  \; \sum_{i=1}^n \langle \ind A, f_i \rangle^2 \lambda_i^2 \|f_i \|^2\nonumber\\
 &\leq& (1-\lambda^2)\langle \ind A, f_1 \rangle^2 +\lambda^2\sum_{i=1}^n \langle \ind A, f_i \rangle^2               
    \nonumber\\
 & = & (1-\lambda^2) \frac{|A|^2}{n} + \lambda^2 \|\ind A\|^2  \nonumber\\
 & = & (1-\lambda^2) \frac{|A|^2}{n} + \lambda^2 |A|.\label{bnnmwe56a}
\end{eqnarray}
Thus~\eqref{eqn:boundSize1} and~\eqref{bnnmwe56a} imply
\begin{eqnarray*}
\E(|A_{t+1}|\: |A_t = A) & \geq &  2|A|-\lambda^2 |A| -(1-\lambda^2)\frac{|A|^2}{n},
\end{eqnarray*}
which is equivalent to~\eqref{eqn:boundSizeAt|At-1}.
\end{proof}

The following corollary is easily obtained from the proof of Lemma~\ref{Eval}.
In a \BIPS process with
$k=1+\r$,  each vertex contacts one randomly chosen neighbour,
and with probability $\r>0$ randomly chooses a second neighbour, with replacement. \begin{corollary}\label{Corol}
Let $A_t$ be the size of the infected set after step $t$ of the \BIPS process with expected branching factor
$k=1+\r$, then
\begin{eqnarray*}
\E(|A_{t+1}\mid |A_t = A) &\geq&
|A| (1+\r(1-\lambda^2 )(1-{|A|}/{n})).
\end{eqnarray*}
\end{corollary}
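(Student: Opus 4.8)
The plan is to run the proof of Lemma~\ref{Eval} essentially verbatim, changing only the per-vertex infection probability to match the new contact rule and then reusing the spectral estimate \eqref{bnnmwe56a} without alteration. The only thing that differs between the $k=2$ model and the $k=1+\r$ model is the probability that a given vertex $u\ne v$ with $d_A(u)$ infected neighbours becomes infected in one step; once that quantity is recomputed, the summation over $\G(A)$ and the eigenvalue bound carry over unchanged.

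First I would fix a vertex $u \neq v$ and write $q = d_A(u)/r$, the probability that a single uniformly chosen neighbour lies in $A$. Under the $k=1+\r$ rule, $u$ always contacts one neighbour and, independently and with probability $\r$, contacts a second one. Conditioning on whether the second contact is made, the probability $p_u$ that $u$ is infected is $p_u = (1-\r)q + \r\bigl(1-(1-q)^2\bigr)$, which after expanding collapses to $p_u = q\bigl(1+\r(1-q)\bigr) = \tfrac{d_A(u)}{r} + \r\,\tfrac{d_A(u)}{r}\bigl(1-\tfrac{d_A(u)}{r}\bigr)$. As in Lemma~\ref{Eval}, the source contributes a deterministic $1$ and, since $p_v \le 1$ for $\r \in (0,1]$, I can drop that $1$ and extend the sum over all of $\G(A)$ to obtain the lower bound $\E(|A_{t+1}| \mid A_t=A) \ge \sum_{u \in \G(A)} p_u$.

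Next I would sum the expression for $p_u$ over $\G(A)$, using the identity $\sum_{u \in \G(A)} d_A(u)/r = |A|$ twice, to get $\E(|A_{t+1}| \mid A_t=A) \ge (1+\r)|A| - \r\sum_{u \in \G(A)} d_A^2(u)/r^2$. At this point the tail term is exactly $\r\sum_{x\in V}P(x,A)^2 = \r\|P\ind A\|^2$, so I can insert the bound $\|P\ind A\|^2 \le (1-\lambda^2)|A|^2/n + \lambda^2|A|$ established in \eqref{bnnmwe56a}. Substituting and factoring gives $\E(|A_{t+1}| \mid A_t=A) \ge |A| + \r(1-\lambda^2)|A|\bigl(1-|A|/n\bigr)$, which is the claimed inequality.

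I do not expect any genuine obstacle here: the whole point is that the spectral work has already been done in the proof of Lemma~\ref{Eval}, and the branching rule enters only through the elementary identity $p_u = q(1+\r(1-q))$. The single step that requires a moment's care is justifying that one may discard the explicit $+1$ from the source and sum over all of $\G(A)$ (including $v$); this is exactly the same manoeuvre used between \eqref{Lab1} and \eqref{Lab2}, valid because each summand $p_u$ is at most $1$. Everything else is the routine algebraic simplification that turns $(1+\r)|A| - \r\lambda^2|A| - \r(1-\lambda^2)|A|^2/n$ into the factored form.
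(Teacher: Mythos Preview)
Your proposal is correct and follows the same approach as the paper: compute the single-vertex infection probability under the $1+\r$ rule, then plug into the proof of Lemma~\ref{Eval} unchanged. The paper writes the per-vertex probability as $1-(1-P(x,A))(1-\r P(x,A)) = (1+\r)P(x,A)-\r P(x,A)^2$ (complement of both contacts missing) rather than your conditioning $(1-\r)q+\r(1-(1-q)^2)$, but these are the same expression, and the paper then simply says ``the rest of the proof is the same.''
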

\begin{proof}
\ignore{
For the first case, we have
\[
\E (|A_{t+1}| \mid A_t=A) = \sum_{u \in \G(A)} 1-[(1-\r)(1-d_A(u)/r)
+\r(1-d_A(u)/r)^2].
\]
}
The probability that
$x$ chooses at least one vertex in the infected set $A$, is
\begin{eqnarray*}
1-(1-P(x,A))(1-\r P(x,A)) 
 & =& (1+\r)P(x,A)-\r P(x,A)^2.
\end{eqnarray*}
The rest of the proof is the same.
\end{proof}

\section{\BIPS process for small sets}\label{SmallPhase}
Consider a \BIPS process $A_t$ with source $v$ on a graph $G$. Thus at $t=0$, $A_0=\{v\}$.
\begin{lemma}\label{lemma:fastphase1}
Let $G$ be a connected $r$-regular graph on $n$ vertices,
with $\lambda=\lambda(n) < 1$, and let $m \leq n/2$.
Then with probability at least $1-\bigO(1/n^C)$, we have that $A_t > m$
for some $t \le T$, where
$T = 13m/(1-\lambda) + 24C\log(n)/(1-\lambda)^2$.
\end{lemma}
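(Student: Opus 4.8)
The plan is to track the scalar process $Y_t = |A_t|$ (the size of the infected set), with $Y_0 = 1$, and to convert the positive drift from Lemma~\ref{Eval} into a high-probability hitting-time bound. Let $\tau = \min\{t : Y_t > m\}$ and set $\delta = (1-\lambda)/2$. Since $m \le n/2$, for every $t < \tau$ we have $Y_t \le m \le n/2$, so $1 - Y_t/n \ge 1/2$, and Lemma~\ref{Eval} together with $1-\lambda^2 \ge 1-\lambda$ gives the multiplicative drift $\E(Y_{t+1}\mid \mathcal{F}_t) \ge Y_t\big(1 + (1-\lambda^2)/2\big) \ge Y_t(1+\delta)$. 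The obstacle is that \BIPS is not monotone: an infected vertex can lose its infection, so $Y_t$ may decrease, and a single step can change $Y_t$ by as much as $\Theta(m)$. Hence I cannot argue by monotone growth, and a bounded-difference (Azuma) or Freedman bound using this $\Theta(m)$ increment bound is far too weak; I must control the downward fluctuations more carefully.

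The key structural fact I would exploit is that, conditioned on $\mathcal{F}_t$, we have $Y_{t+1} = 1 + \sum_{u \ne v}\indi{u \in A_{t+1}}$, a sum of \emph{independent} indicators, since each vertex chooses its $k=2$ neighbours independently, with $\sum_{u\ne v}\Pr(u\in A_{t+1}\mid\mathcal{F}_t) = \E(Y_{t+1}\mid\mathcal{F}_t)-1$. This lets me build an exponential supermartingale. Fix $\eta = \delta$, $\beta = \delta^2/2$, and consider $\Psi_t = \exp(-\eta Y_{t\wedge\tau} + \beta(t\wedge\tau))$. Applying $1-x \le e^{-x}$ to each independent factor, for $t<\tau$,
\[
\E\!\left(e^{-\eta Y_{t+1}}\mid\mathcal{F}_t\right) \;\le\; \exp\!\left(-\eta - (1-e^{-\eta})\big(\E(Y_{t+1}\mid\mathcal{F}_t)-1\big)\right),
\]
and since the coefficient $-(1-e^{-\eta})$ is negative I may substitute the drift lower bound $\E(Y_{t+1}\mid\mathcal{F}_t)\ge Y_t(1+\delta)$. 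A short calculation using $Y_t \ge 1$, the estimates $1-e^{-\eta}\ge\eta/2$ and $\eta-(1-e^{-\eta})\le\eta^2/2$ valid for $\eta\le 1$, shows the exponent is at most $-\eta Y_t - \beta$, so $\E(\Psi_{t+1}\mid\mathcal{F}_t)\le\Psi_t$, i.e. $\Psi_t$ is a supermartingale with $\E\Psi_t \le \Psi_0 = e^{-\eta}$.

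Finally I would read off the tail. On $\{\tau > T\}$ we have $Y_T \le m$, hence $\Psi_T \ge \exp(-\eta m + \beta T)$, and nonnegativity of $\Psi$ gives
\[
\Pr(\tau > T) \;\le\; e^{-\eta}\,e^{\eta m - \beta T} \;=\; \exp\!\big(\delta(m-1) - (\delta^2/2)\,T\big).
\]
Choosing $T$ so that $(\delta^2/2)T \ge \delta m + C\log n$, i.e. $T \ge 4m/(1-\lambda) + 8C\log n/(1-\lambda)^2$, makes this at most $n^{-C}$; the slack between these constants and the stated $13$ and $24$ comes from the crude bounds on $1-e^{-\eta}$ and from using $1-\lambda^2\ge 1-\lambda$. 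The crux is the supermartingale verification in the second paragraph: the correct choice is $\eta = \Theta(1-\lambda)$ rather than a much smaller value, and this is affordable precisely because the drift is multiplicative ($\ge \delta Y_t$) while the conditional variance is only $O(Y_t)$, so the second-order MGF term is dominated by the drift. I expect this step to be the main obstacle, and it is exactly where the independent-Bernoulli structure of a single \BIPS step is essential.
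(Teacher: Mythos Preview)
Your proposal is correct and follows essentially the same approach as the paper: an exponential-moment argument that exploits the fact that, given $\mathcal{F}_t$, the size $Y_{t+1}$ is a sum of independent Bernoullis, combined with the multiplicative drift from Lemma~\ref{Eval}. The paper writes this as a recursion for $G_t(\phi)=\E\big(e^{-\phi(A_t-A_0)}\ind{\{E_{t-1}\}}\big)$ with $\phi=\log\big(1+\tfrac{1-\lambda}{2}\big)$, while you package the same computation as a stopped supermartingale $\Psi_t=\exp(-\eta Y_{t\wedge\tau}+\beta(t\wedge\tau))$ with $\eta=\tfrac{1-\lambda}{2}$; these choices agree to first order and lead to the same tail bound, with your constants in fact slightly tighter than the paper's $13$ and~$24$.
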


\begin{sloppy}

\begin{proof}
Let  $A_0$ be the initial infected set and $A_t$ be the infected set at the end of round $t$.
For convenience  we denote by $A_t$ the size of the set $A_t$, instead of $|A_t|$.

Denote the event $E_t = \{A_0< m+1, \ldots, A_t < m+1\}$. We need to find an upper bound for $\Prob(E_t)$ for any $t$. Observe $\Prob(E_0)=1$, so we concentrate on $t \geq 1$. Let $\f>0$ (to be chosen later), then
\begin{eqnarray}
\Prob(E_t) &=& \Prob(E_{t-1}, A_t < m+1) \nonumber \\
 & = &  \Prob(E_{t-1}, A_t - A_0< m)\nonumber \nonumber \\
& = & \Prob(E_{t-1}, e^{-\phi(A_t-A_0)} > e^{-\phi m}) \nonumber \\
& = & \Prob( e^{-\phi(A_t-A_0)} \ind{\{E_{t-1}\}} > e^{-\phi m})\nonumber\\
&\leq& e^{\phi m} \; \E(e^{-\phi(A_t-A_0)} \ind{\{E_{t-1}\}})
\label{eqn:boundprobability}.
\end{eqnarray}
Because $\ind{\{E_{t-1}\}} = \ind{\{E_{t-2}\}} \ind{\{A_{t-1} < m+1\}}$,
and assuming $\ind{\{E_{-1}\}}\equiv 1$, we can write
\begin{eqnarray}
G_t(\f)  & \equiv  & \E(e^{-\phi(A_t-A_0)} \ind{\{E_{t-1}\}}) 
 \; = \; \E(e^{-\phi(A_t-A_{t-1})}e^{-\phi(A_{t-1}-A_0)} 
\ind{\{E_{t-2}\}} \ind{\{A_{t-1} < m+1\}})\label{rewrite}
\end{eqnarray}
Observe that $G_0(\f) \equiv 1$.
Denote the sigma algebra $\mathcal F_t = \sigma(A_0, \ldots, A_t)$. By taking expectation conditional  on $\mathcal F_{t-1}$ we rewrite \eqref{rewrite} as
\begin{eqnarray}
G_t(\f)
&=&\E\,(\: \E\,(\: e^{-\phi(A_{t-1}-A_0)} \ind{\{E_{t-2}\}} \ind{\{A_{t-1} < m+1\}} 
        \, \times \, e^{-\phi(A_t-A_{t-1})} | \mathcal F_{t-1}\,)\,) \nonumber\\
&=& \E\,(\: e^{-\phi(A_{t-1}-A_0)} \ind{\{E_{t-2}\}} \ind{\{A_{t-1} < m+1\}} 
     \, \times \, \E(e^{-\phi(A_t-A_{t-1})} | \mathcal F_{t-1}\,)\,).\label{eqn:partialG_t1}
\end{eqnarray}
We derive an upper bound on $\E(e^{-\phi(A_t-A_{t-1})} | \mathcal F_{t-1})$. Observe that
\begin{equation}
E(e^{-\phi(A_t-A_{t-1})} | \mathcal F_{t-1}) \;\; = \;\; e^{\phi A_{t-1}} \;
\E(e^{-\phi A_t}| \mathcal F_{t-1}).\label{Eefa}
\end{equation}
For  $x\in V$ the events $\{x \in A_t | \mathcal F_{t-1}\}$
are independent, and since $A_t$ is a Markov chain, they depend only on $A_{t-1}$.
Thus
\begin{eqnarray}
\E(e^{-\phi A_t}| \mathcal F_{t-1})
&=& \prod_{x \in V} \E(e^{-\phi \mbox{{\small 1}}_{\{x \in A_t\}}}|\mathcal F_{t-1}) \nonumber\\
& = &  \prod_{x \in V} \brac{e^{-\phi}\Prob(x \in A_t	|A_{t-1}) + 1-\Prob(x \in A_t|A_{t-1})}\nonumber\\
&=&\prod_{x \in V} \brac{1-(1-e^{-\phi}) \Prob(x \in A_t	|A_{t-1})}\nonumber\\
& \leq &  \prod_{x \in V} \exp\{-(1-e^{-\phi})\Prob(x \in A_t|A_{t-1})\}\nonumber\\
&=&  \exp\{-(1-e^{-\phi})\sum_{x \in V}\Prob(x \in A_t|A_{t-1})\}\nonumber\\
& = & \exp\{-(1-e^{-\phi})\E(A_{t}|A_{t-1})\} \label{eqn:firstbound}.
\end{eqnarray}
Substitute \eqref{eqn:firstbound} into \eqref{Eefa} and~\eqref{eqn:partialG_t1}~to get
\begin{eqnarray}
G_t(\f)
 & \leq & \E \left[\: e^{-\phi(A_{t-1}-A_0)} \ind{\{E_{t-2}\}} 
\, \times \: \ind{\{A_{t-1} < m+1\}} e^{\phi A_{t-1}} \exp\{-(1-e^{-\phi})
\E(A_{t}|A_{t-1})\} \right]. 
\label{eqn:partialG_t2}
\end{eqnarray}
Define 
\[ \Psi(A) \; = \;
   \ind{\{A < m+1\}} e^{\phi A} \exp\{-(1-e^{-\phi}) \E(A_{t}|A_{t-1}=A)\}.
\]
Remember that, due to the source, we have $A_t>0$ for all $t \geq 0$, thus we consider $A$ with size at least $1$ in $\Psi(A)$. Denote by $\delta =\delta(A) =  ({\E(A_t|A_{t-1} = A)})/{A}$. Then
\begin{eqnarray}
\Psi(A) &=&
\ind{\{1 \le A \le m\}}
\exp\brac{ - A ((1-e^{-\phi})\d-\f)}.\label{Nice1}
\end{eqnarray}
Our next step is to find an upper bound of $\Psi(A)$ independent of $A$.
For $k\in\{1,\ldots, n\}$ define $\delta_\lambda(k) = 1+(1-\lambda^2)(1-k/n)$.
We now choose $\phi = \log(1+x)$ where $x = \frac{1-\lambda}{2} $.
Since $|A| \leq n/2$
from \eqref{eqn:boundSizeAt|At-1} of Lemma \ref{Eval}
we get
\begin{eqnarray*}
\delta(A) & \ge & \delta_\lambda(|A|)
  \;\; =\;\; 1 +(1-\lambda^2)(1-{|A|}/{{n}}) \\
  & \geq & 1+(1-\lambda)(1+\lambda)/2 \\
   & \geq & 1+\frac{1-\lambda}{2}
   \;\; = \;\; e^{\phi}.
\end{eqnarray*}
Using this with \eqref{Nice1}, we get that
\begin{eqnarray}
\Psi(A) &=&
\ind{\{1 \le A \le m\}}
\exp\brac{ - A ((1-e^{-\phi})\d-\f)} \nonumber\\
&\leq& \ind{\{1 \le A \le m\}}
\exp\brac{ - A (e^\phi-1-\f)}.
\end{eqnarray}
Observe that $f(y) = e^y-1-y > 0$, thus we take $A = 1$ in the above quantity to get
\begin{eqnarray}
\Psi(A) \; \leq \; e^{1+\phi-e^\phi} \; = \; e^{\log(1+x)-x}.
\end{eqnarray}

From Inequality~\eqref{eqn:partialG_t2} and the fact that $\Psi(A)\leq e^{\log(1+x)-x}$, we get
\begin{eqnarray}
G_t(\phi)
 &\leq& \E \left[
e^{-\phi(A_{t-1}-A_0)} \ind{\{E_{t-2}\}} \, \times \;
 \ind{\{A_{t-1} < m+1\}} e^{\phi A_{t-1}} \exp\{-(1-e^{-\phi})
\E(A_{t}|A_{t-1})\} \right]\nonumber\\
&=& \E(e^{-\phi(A_{t-1}-A_0)} \ind{\{E_{t-2}\}} \Psi(A_{t-1}))\nonumber \\
& \leq& \E(e^{-\phi(A_{t-1}-A_0)} \ind{\{E_{t-2}\}})\exp(\log(1+x)-x)\nonumber\\
&=& G_{t-1}(\phi)\exp(\log(1+x)-x).\nonumber
\end{eqnarray}
Using $G_0(\f) \equiv 1$ and induction we get 
\[ G_t(\phi) \leq e^{t(\log(1+x)-x)}.
\] 
Putting $G_t(\phi)$ into Inequality~(\ref{eqn:boundprobability}), we obtain
\begin{eqnarray}
\Prob(E_t) \;\; \leq \;\; G_t(\phi)e^{\phi m } \;\; \leq \;\; e^{m \phi+t(\log(1+x)-x)}.
\end{eqnarray}
We need to estimate the exponent. Recall that $\phi = (1+x)$ with $x = \frac{1-\lambda}{2} \leq 1/2$. For $x<1$ the terms of $\log(1+x)$ are monotone decreasing in absolute value, so that $\log(1+x) \le x-x^2/2+x^3/3$. From this, and $x \le 1/2$, the exponent can be bounded as follows,
\begin{eqnarray*}
(t+m)\log(1+x)-tx 
 &\leq& -tx + (t+m)\left(x-\frac{x^2}{2}+\frac{x^3}{3}\right) \\
 &=& x \brac{m \brac{1-\frac{x}{2}+\frac{x^2}{3}} - \frac {tx}{2}\brac{1-\frac{2x}{3}}}\\
 &\le& x \brac{ \frac{13 m}{12}- \frac{t x}{6}}.
\end{eqnarray*}
Put $x=(1-\lambda)/2$ and choose 
\[ t=13 m/(1-\l)+ (24C \log (n))/(1-\l)^2
\] 
to obtain
the result.
\end{proof}

\end{sloppy}


\section{\BIPS process for large sets}\label{Fast-2}

In this section we analyze the growth of $A_t$ in the \BIPS process 
from $A_{t'}= \Theta(\log n/(1-\lambda)^2)$ up to $A_{t''}=n$.
We start by applying Lemma~\ref{lemma:fastphase1}
with $m=K \log(n)/(1-\lambda)^2$, for some (large) constant $K$, to have \whp\
$A_t \ge K \log n/(1-\lambda)^2$ for some $t=\bigO(\log n /(1-\lambda)^3)$.
Lemma~\ref{lemma:fastphase2} shows that from this point additional 
$\bigO(\log(n)/(1-\lambda))$ rounds bring the infection size $A_t$ up to at least $(9/10)n$ \whp\
Lemma~\ref{lemma:expanderPhase3} shows that when $A_t$ becomes $\ge (9/10)n$, then 
\whp\ the whole graph becomes infected at some point within the subsequent 
$\bigO(\log(n)/(1-\lambda))$ rounds.


\begin{lemma}\label{lemma:fastphase2}
Let $G$ be a connected $n$-vertex $r$-regular graph with $\lambda<1$. Suppose that $|A_t| \geq K \log(n)/(1-\lambda)^2$, with $K = 4000$. Then the \BIPS process infects at least $9/10$ of the whole graph in $\bigO(\log(n)/(1-\lambda))$ extra rounds with probability at least $1-\bigO(1/n^3)$.
\end{lemma}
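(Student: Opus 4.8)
The plan is to show that, while the infected set has size between the threshold $m_0 := K\log n/(1-\lambda)^2$ and $9n/10$, a single step multiplies the size by a factor $1+\Omega(1-\lambda)$ except with probability $n^{-c}$ for a constant $c>3$; geometric growth then reaches $9n/10$ within $R = \bigO(\log n/(1-\lambda))$ rounds, and a union bound over these rounds controls the total failure probability.

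First I would record the one-step concentration. Conditioned on $\mathcal F_t$ (equivalently, on the set $A_t$), the quantity $|A_{t+1}|$ is a sum of the independent indicators $\ind{\{x\in A_{t+1}\}}$ over $x\in V$ — the same independence used in Lemma~\ref{lemma:fastphase1} — so a lower-tail Chernoff bound applies. By Lemma~\ref{Eval}, whenever $|A_t|=A\le 9n/10$,
$$\E(|A_{t+1}|\mid A_t)\;\ge\;A\bigl(1+(1-\lambda^2)(1-A/n)\bigr)\;\ge\;A\bigl(1+(1-\lambda)/10\bigr),$$
using $1-\lambda^2\ge 1-\lambda$ and $1-A/n\ge 1/10$. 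Choosing a relative deviation $\delta=\Theta(1-\lambda)$, so that the surviving growth factor is still $1+(1-\lambda)/20$, the multiplicative Chernoff bound gives
$$\Prob\bigl(|A_{t+1}|<(1+(1-\lambda)/20)\,A \;\bigm|\; A_t\bigr)\;\le\;\exp\bigl(-\Omega((1-\lambda)^2 A)\bigr).$$
Here I would use that the lower-tail bound is monotone in the mean, so the lower bound on $\E(|A_{t+1}|\mid A_t)$ supplied by Lemma~\ref{Eval} is enough. The role of the hypothesis $A\ge m_0$ is exactly to force the exponent $\Omega((1-\lambda)^2 A)$ up to at least $c\log n$; the constant $K=4000$ is tuned to make $c>3$.

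Next I would iterate this for $R=\bigO(\log n/(1-\lambda))$ rounds. Define, for each $t$, the bad event $\mathcal B_t$ that $A_s\ge m_0$ for all $s\le t$ and $A_t<9n/10$, yet $|A_{t+1}|<(1+(1-\lambda)/20)A_t$. The conditioning clauses are $\mathcal F_t$-measurable, so the previous paragraph yields $\Prob(\mathcal B_t)\le n^{-c}$. On the complement $\bigcap_{t<R}\mathcal B_t^c$ an induction shows that the invariant $A_t\ge m_0$ is maintained and the size grows by a factor $\ge 1+(1-\lambda)/20$ each round: the invariant and the growth reinforce one another, since a growth step keeps the set above $m_0$, and being above $m_0$ licenses the next growth step. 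Hence $A_t\ge m_0\,(1+(1-\lambda)/20)^t$ until the set first exceeds $9n/10$, which must happen within $R=\bigO(\log n/(1-\lambda))$ rounds because $\log((9n/10)/m_0)=\bigO(\log n)$. A union bound then gives total failure probability $\bigO(R/n^{c})=\bigO(1/n^3)$, the constant $K$ (hence $c$) being chosen large enough relative to the number of rounds $R$, which is $\bigO(n)$ in our application.

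The main obstacle is the coupling of two competing constraints in the choice of $\delta$: it must be small enough ($\delta=\bigO(1-\lambda)$) that the guaranteed per-step growth factor remains $1+\Omega(1-\lambda)$, keeping the round count at $\bigO(\log n/(1-\lambda))$; yet the Chernoff exponent $\delta^2 A$ must still reach $c\log n$ with $c>3$ so as to survive the union bound over $\approx R$ rounds. These two demands together are precisely what pin down the threshold $m_0=\Theta(\log n/(1-\lambda)^2)$ and the concrete constant $K$. A secondary technical point is the adaptivity of the union bound: the bad events must be phrased so that the ``set is still above $m_0$'' invariant is carried along the good path rather than assumed, which the inductive formulation of $\mathcal B_t$ above handles cleanly.
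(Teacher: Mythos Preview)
Your proposal is correct and follows essentially the same route as the paper: use Lemma~\ref{Eval} to get $\E(|A_{t+1}|\mid A_t)\ge A_t(1+(1-\lambda)/10)$ while $|A_t|\le 9n/10$, apply a lower-tail Chernoff bound to obtain per-step multiplicative growth $1+\Omega(1-\lambda)$ with failure probability $n^{-c}$, and union bound over $\bigO(\log n/(1-\lambda))$ rounds. The only cosmetic difference is that the paper parametrizes the deviation as $\varepsilon=\sqrt{10\log n/A_t}$ and then checks $\varepsilon\le(1-\lambda)/20$ from $A_t\ge 4000\log n/(1-\lambda)^2$, whereas you fix $\delta=\Theta(1-\lambda)$ up front; both choices yield a Chernoff exponent of order $(1-\lambda)^2 A_t\ge 5\log n$.
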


\begin{proof}
Assume $A_t$ has size less or equal than $9n/10$ but greater than $K \log(n)/(1-\lambda)^2$, then from Lemma \ref{Eval}
\begin{eqnarray*}
\E(A_{t+1}|A_t) & \geq & A_t(1+(1-\lambda^2)(1-9/10)) \\
& \geq & A_t\brac{1+\frac{1-\lambda}{10}}.
\end{eqnarray*}
Let $\varepsilon = \sqrt{10\log(n)/A_t}$. 
Observe that, given $A_t$, the size of $A_{t+1}$  is the sum of independent Bernoulli random variables. 
Using Chernoff bound for the lower tail of the sum of Bernoulli random variables, we get
\begin{eqnarray}
\Prob(A_{t+1} < (1-\varepsilon)\E(A_{t+1}|A_t)|A_t) 
  & \leq & e^{-\varepsilon^2 \E(A_{t+1}|A_t)/2} 
  \; = \;  e^{-5\log(n)} \; =\; \frac{1}{n^5}.\label{eqn:Hoeffding1}
\end{eqnarray}
By hypothesis $A_t \ge 4000\log n / (1-\lambda)^2$, so $\varepsilon \le (1-\l)/20$.
Therefore, with probability at least $1-n^{-5}$ we have
\begin{eqnarray*}
A_{t+1} & \geq & (1-\varepsilon)\E(A_{t+1}|A_t) \\
 & \geq & A_t\brac{1+\frac{1-\lambda}{10}} \brac{1-\frac{1-\lambda}{20}} \\
 & \ge & A_t \brac{1+\frac{1-\lambda}{23}}.
\end{eqnarray*}
Finally, we have that after $23/(1-\lambda)$ rounds, the size of infection has at least doubled. 
Hence, with probability at least $1-23\log(n)n^{-5}/(1-\lambda) \geq 1-n^{-4}$, 
after $23\log(n)/(1-\lambda)$ rounds, the infection covers at least $9n/10$ vertices.
\end{proof}

\begin{lemma}\label{lemma:expanderPhase3}
Let $G$ be a connected $n$-vertex $r$-regular graph with $1-\lambda \gg \sqrt{\log(n)/n}$.
With probability at least $1-n^{-5}$, after
 $T \le 8\log(n)/(1-\lambda)$ rounds the \BIPS process infects the whole graph.
\end{lemma}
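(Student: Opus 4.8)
The plan is to track the \emph{uninfected} set $B_t = V\setminus A_t$ and show it is emptied within the allotted window. Throughout I work in the regime in which the lemma is applied, namely the process enters this phase with $|A_0|\ge (9/10)n$, i.e.\ $|B_0|\le n/10$. The first observation is that $\{A_t=V\}$ is absorbing: once every vertex is infected, at the next step each vertex selects only infected neighbours and stays infected. Hence $\{A_T\neq V\}=\{|B_T|\ge 1\}$, and it suffices to bound the probability of the latter event.

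For the one-step drift, note that a vertex $u\neq v$ fails to be infected at step $t+1$ exactly when both of its $k=2$ chosen neighbours lie in $B_t$, an event of probability $P(u,B_t)^2$ with $P(u,B_t)=d_{B_t}(u)/r$. These events are independent across $u$, so conditionally on $A_t$ the quantity $|B_{t+1}|$ is a sum of independent indicators, and
\begin{equation*}
\E(|B_{t+1}|\mid A_t)\;=\;\sum_{u\neq v}P(u,B_t)^2\;\le\;\|P\ind{B_t}\|^2\;\le\;(1-\lambda^2)\frac{|B_t|^2}{n}+\lambda^2|B_t|,
\end{equation*}
the last step being the eigenvalue estimate \eqref{bnnmwe56a} of Lemma~\ref{Eval} applied to the set $B_t$. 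When $|B_t|\le n/10$ this gives $\E(|B_{t+1}|\mid A_t)\le \beta|B_t|$ with $\beta=1-\tfrac{9}{10}(1-\lambda^2)\le 1-\tfrac{9}{10}(1-\lambda)<1$.

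The main step is to upgrade this geometric contraction in expectation to a high-probability statement that $B_T$ is empty, and I would do this with a stopped supermartingale. Let $\tau=\min\{s:|B_s|>n/10\}$ and put $M_s=\beta^{-(s\wedge\tau)}|B_{s\wedge\tau}|$; the drift bound makes $M_s$ a supermartingale, so $\E M_T\le M_0=|B_0|\le n$. Restricting to $\{\tau>T\}$, where $s\wedge\tau=s$, yields $\E\brac{|B_T|\ind{\{\tau>T\}}}\le\beta^T n$, and since $|B_T|$ is integer valued, $\Prob(|B_T|\ge 1,\ \tau>T)\le \beta^T n$. Using $\beta^T\le e^{-\frac{9}{10}(1-\lambda)T}$, the choice $T\le 8\log(n)/(1-\lambda)$ makes this at most $\tfrac12 n^{-5}$. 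To control the escape event $\{\tau\le T\}$ I use the hypothesis $1-\lambda\gg\sqrt{\log(n)/n}$: starting from any step with $|B_s|\le n/10$ the conditional mean of $|B_{s+1}|$ is at most $\beta n/10$, so the gap to $n/10$ is $\Omega((1-\lambda)n)$, and a Chernoff upper-tail bound gives $\Prob(|B_{s+1}|>n/10\mid A_s)\le \exp(-\Omega((1-\lambda)^2 n))$. Since $(1-\lambda)^2 n\gg\log n$ this is $n^{-\omega(1)}$, and a union bound over the $\bigO(\log(n)/(1-\lambda))$ steps keeps $\Prob(\tau\le T)\le\tfrac12 n^{-5}$. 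Combining the two estimates gives $\Prob(A_T\neq V)\le n^{-5}$.

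I expect the main obstacle to be exactly the bookkeeping around the escape event. The contraction $\E(|B_{t+1}|\mid A_t)\le\beta|B_t|$ is only available while $|B_t|$ stays below $n/10$, so one must rule out, over the whole window, the possibility that the refreshing (SIS) nature of the process lets the uninfected set swell back above this threshold; this is what forces the stopping-time device and, through it, the spectral-gap assumption $1-\lambda\gg\sqrt{\log(n)/n}$ that guarantees the downward drift $\Omega((1-\lambda)n)$ dominates the $\bigO(\sqrt n)$ fluctuations.
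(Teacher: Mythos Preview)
Your proof is correct and follows essentially the same route as the paper: track the uninfected set $B_t$, use the spectral bound~\eqref{bnnmwe56a} to obtain a geometric contraction $\E(|B_{t+1}|\mid A_t)\le\beta|B_t|$ while $|B_t|\le n/10$, control the escape event $\{|B_s|>n/10\}$ by a Chernoff/Hoeffding bound (this is where $1-\lambda\gg\sqrt{\log(n)/n}$ enters), and finish with Markov's inequality on $|B_T|$. The only cosmetic difference is packaging: you use a stopped supermartingale $M_s=\beta^{-(s\wedge\tau)}|B_{s\wedge\tau}|$ to separate the two failure modes, whereas the paper iterates the unconditional expectation $\E(B_{t+1})\le\theta\,\E(B_t)+n\,\Pr(A_t<qn)$ and carries the escape probability as an additive error term; both implementations encode the same idea.
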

\begin{proof}
For convenience, let $A_0$ and $B_0$ be the size of the infected and non-infected sets at the beginning of this phase and denote $q = 9/10$. Clearly $A_0 \ge qn$. Let  $A_t$ and $B_t$ be their respective sizes  after $t$ rounds.
From~\eqref{eqn:boundSizeAt|At-1} we get
\begin{eqnarray}
\E(A_{t+1}|A_t = A) &\geq&
A+(n-A)(1-\lambda^2)A/n. \label{eqn:ExpectedA_tbigSize}
\end{eqnarray}
The corresponding inequality for $B_{t+1}$ is
\begin{eqnarray}
\E(B_{t+1}|B_t) &\leq& B_t-B_t(1-\lambda^2)A_t/n  \nonumber \\
 & =& B_t (1-(1-\lambda^2)A_t/n).
\label{eqn:expectedB_i}
\end{eqnarray}
Let $|A_t|=k$. By applying the law of total probability and equation~\eqref{eqn:expectedB_i}, we get
\begin{eqnarray}
\E(B_{t+1})
&=& \!\! \sum_{k =qn}^n \E(B_{t+1}|B_t=n-k)\Prob(B_t =n-k) 
\: +\: \E(B_{t+1}|A_t< qn)\Prob(A_t < qn) \nonumber\\
&\leq & \!\!  \sum_{k = qn}^n (n-k)(1-(1-\l^2)k/n)\Prob(B_t = n-k) 
\:  +\: n\Prob(A_t < qn)\nonumber\\
&\leq& \!\!  \sum_{k = qn}^n (n-k)(1-(1-\l^2)q)\Prob(B_t = n-k) 
\: + \: n\Prob(A_t < qn)\nonumber\\
&\leq& \!\! (1-(1-\lambda^2)q)\E(B_{t})+n\Prob(A_t < qn).
\label{eqn:estimationE(B_i+1)}
\end{eqnarray}
We next prove that
\begin{eqnarray}\label{eqn:PA_t<qn}
 \Pr(A_{t} < qn)\leq  t n^{-8}.
\end{eqnarray}
To check the last inequality consider the event $E_t = \{A_t  \geq qn, i = 0,\ldots, t\}$. We are going to prove that $E_t$ has high probability. Indeed
\begin{eqnarray}
\Prob(E_t) &=& \Prob(E_t| E_{t-1})\Prob(E_{t-1}) +\Prob(E_t| E_{t-1}^c)\Prob(E_{t-1}^c) \nonumber\\
&\ge& \Prob(E_t| E_{t-1})\Prob(E_{t-1}).\nonumber
\end{eqnarray}
Observe that $A_t$ depends only on $A_{t-1}$ since it is a Markov chain, then 
$$\Prob(E_t| E_{t-1}) \; = \; \Prob(A_t \geq qn| A_{t-1} \geq qn),$$
and by a standard coupling argument
$$\Prob(A_t \geq qn| A_{t-1} \geq qn) \; \geq \;\Prob(A_t \geq qn| A_{t-1} = qn).$$
Choose $\varepsilon = \sqrt{16\log(n)/qn}$, then, by Chernoff bound
\begin{eqnarray*}
\Prob(A_{t+1} < (1-\varepsilon)\E(A_{t+1}|A_t=qn)|A_t=qn)
 & \leq & e^{-\varepsilon^2 \E(A_{t+1}|A_t=qn)/2} 
\; = \; e^{-8\log(n)}
\; =\; \frac{1}{n^8}. 
\end{eqnarray*}
Since we assume that $1-\lambda \gg \sqrt{\log(n)/n}$,
we have
$A_t = qn \geq 4000 \log(n)/(1-\lambda)^2$, 
so $\varepsilon \le (1-\l)/15$.
Thus with probability at least $1-n^{-8}$ we have
\begin{eqnarray*}
A_{t+1} & \geq & (1-\varepsilon)\E(A_{t+1}|A_t=qn) \\
 & \ge & qn\brac{1+\frac{1-\lambda}{10}}\brac{1-\frac{1-\lambda}{15}} \; \ge \; qn.
\end{eqnarray*}
Therefore $\Prob(A_t \geq qn| A_{t-1} = qn) \geq 1-n^{-8}$. We conclude that
 $$\Prob(E_t) \geq (1-n^{-8})^t \geq 1-tn^{-8}.$$
Observe that $\Prob(A_t \geq qn) \geq \Prob(E_t) \geq 1-tn^{-8}$, so~\eqref{eqn:PA_t<qn} holds.

Returning to our analysis, from Inequalities~\eqref{eqn:estimationE(B_i+1)} and \eqref{eqn:PA_t<qn} we have
\begin{equation}\label{vcxvjs78sw1}
  \E(B_{t+1}) \; \leq \; (1-(1-\lambda^2)q)\E(B_{t})+tn^{-7}.
\end{equation}
Denote $\theta = (1-(1-\lambda^2)q)$, then by iterating~\eqref{vcxvjs78sw1}
and using $B_0 = (1-q)n$, we get
$$\E(B_{t}) \; \leq \; \theta^t(1-q)n + \bigO(t^2n^{-7}) \leq n\theta^t+\bigO(t^2n^{-7}).$$
Choosing $T = 5\log(n)/\log(1/\theta)$ and applying Markov's inequality give
\begin{eqnarray}
\Prob(B_T \geq 1) & \leq & \E(B_T) \; \leq \; \theta^T+\bigO(T^2n^{-7}) \nonumber \\
& = & n^{-5}+\bigO(T^2n^{-7}). \label{nkkio46a}
\end{eqnarray}
Finally, observe that for $0 < \th <1$ we have $(1-\th) \le \log (1/\th)$ and thus
\begin{eqnarray}
T &=& 6 \log(n)/(\log(1/\theta)) \; \leq \; 6 \log(n)/(1-\theta) \nonumber \\
 & \leq & 6 \log(n)/(q(1-\lambda^2)) 
 \; \leq \;  6\log(n)/(q(1-\lambda)) \nonumber \\
& \leq & 8\log(n)/(1-\lambda)
\; = \; \bigO(n), \label{nbmzvw}
\end{eqnarray}
where the last bound follows from the assumption $1-\lambda \gg \sqrt{(\log n)/n}$.
We obtain $\Prob(B_T \geq 1) = \bigO(n^{-5})$
from~\eqref{nkkio46a} and~\eqref{nbmzvw}.
\end{proof}

\begin{sloppy}

\begin{proof}[of Theorem~\ref{Th2}]
Apply Lemma~\ref{lemma:fastphase1} with $m = 4000(\log n)/(1-\lambda^2)$. 
Observe that since we assume $1-\lambda \gg \sqrt{(\log n)/n}$, 
there is no problem with the restriction $m \leq n/2$ in this lemma. 
After that a straightforward application of Lemmas~\ref{lemma:fastphase2} 
and~\ref{lemma:expanderPhase3} gives us the result.
\end{proof}

\end{sloppy}

\newpage

\bibliographystyle{abbrv}

\begin{thebibliography}{10}

\bibitem{DBLP:journals/cpc/AlonAKKLT11}
N.~Alon, C.~Avin, M.~Kouck{\'{y}}, G.~Kozma, Z.~Lotker, and M.~R. Tuttle.
\newblock Many random walks are faster than one.
\newblock {\em Combinatorics, Probability {\&} Computing}, 20(4):481--502,
  2011.

\bibitem{BZG}
C.~Bezuidenhout and G.~Grimmett.
\newblock The critical contact process dies out.
\newblock {\em The Annals of Probability}, pages 1462--1482, 1990.

\bibitem{DBLP:conf/stoc/BroderKRU89}
A.~Z. Broder, A.~R. Karlin, P.~Raghavan, and E.~Upfal.
\newblock Trading space for time in undirected s-t connectivity.
\newblock In {\em {STOC}}, pages 543--549. {ACM}, 1989.

\bibitem{DBLP:journals/siamdm/CooperFR09}
C.~Cooper, A.~M. Frieze, and T.~Radzik.
\newblock Multiple random walks in random regular graphs.
\newblock {\em {SIAM} J. Discrete Math.}, 23(4):1738--1761, 2009.

\bibitem{COBRA}
C.~Dutta, G.~Pandurangan, R.~Rajaraman, and S.~Roche.
\newblock Coalescing-branching random walks on graphs.
\newblock In {\em Proceedings of the Twenty-fifth Annual ACM Symposium on
  Parallelism in Algorithms and Architectures}, SPAA '13, pages 176--185, New
  York, NY, USA, 2013. ACM.

\bibitem{Dutta:2015:CRW:2821462.2817830}
C.~Dutta, G.~Pandurangan, R.~Rajaraman, and S.~Roche.
\newblock Coalescing-branching random walks on graphs.
\newblock {\em ACM Trans. Parallel Comput.}, 2(3):20:1--20:29, Nov. 2015.

\bibitem{DBLP:conf/icalp/ElsasserS09}
R.~Els{\"{a}}sser and T.~Sauerwald.
\newblock Tight bounds for the cover time of multiple random walks.
\newblock In S.~Albers, A.~Marchetti{-}Spaccamela, Y.~Matias, S.~E.
  Nikoletseas, and W.~Thomas, editors, {\em Automata, Languages and
  Programming, 36th International Colloquium, {ICALP} 2009, Rhodes, Greece,
  July 5-12, 2009, Proceedings, Part {I}}, volume 5555 of {\em Lecture Notes in
  Computer Science}, pages 415--426. Springer, 2009.

\bibitem{Har}
T.~E. Harris.
\newblock Contact interactions on a lattice.
\newblock {\em Ann. Probab.}, 2(6):969--988, 12 1974.

\bibitem{IMBG}
G.~Innocent, I.~Morrison, J.~Brownlie, and G.~Gettinby.
\newblock A computer simulation of the transmission dynamics and the effects of
  duration of immunity and survival of persistently infected animals on the
  spread of bovine viral diarrhoea virus in dairy cattle.
\newblock {\em Epidemiology and Infection}, 119:91--100, 8 1997.

\bibitem{Lig}
T.~Liggett.
\newblock {\em Stochastic Interacting Systems: Contact, Voter and Exclusion
  Processes}.
\newblock Die Grundlehren der mathematischen Wissenschaften in
  Einzeldarstellungen. Springer, 1999.

\bibitem{Lig2}
T.~M. Liggett.
\newblock {\em Perplexing Problems in Probability: Festschrift in Honor of
  Harry Kesten}, chapter Branching Random Walks on Finite Trees, pages
  315--330.
\newblock Birkh{\"a}user Boston, Boston, MA, 1999.

\bibitem{MadS}
N.~Madras and R.~Schinazi.
\newblock Branching random walks on trees.
\newblock {\em Stochastic Processes and their Applications}, 42(2):255 -- 267,
  1992.

\bibitem{Pem}
R.~Pemantle.
\newblock The contact process on trees.
\newblock {\em Ann. Probab.}, 20(4):2089--2116, 10 1992.

\end{thebibliography}

\end{document}